\newtheorem{theorem}{Theorem}
\DeclareMathOperator{\poly}{poly}
\DeclareMathOperator{\per}{per}
\def\Id{1\!\mathrm{l}}
\newcommand{\mr}[1]{\mathrm{#1}}
\begin{document}
\title{Quantum simulation of partially distinguishable boson sampling}
\author[1,2,3]{Alexandra E. Moylett\thanks{alex.moylett@bristol.ac.uk}}
\author[1]{Peter S. Turner\thanks{peter.turner@bristol.ac.uk}}
\affil[1]{Quantum Engineering Technology Labs, H. H. Wills Physics Laboratory and Department of Electrical \& Electronic Engineering, University of Bristol, BS8 1FD, UK}
\affil[2]{Quantum Engineering Centre for Doctoral Training, H. H. Wills Physics Laboratory and Department of Electrical \& Electronic Engineering, University of Bristol, BS8 1FD, UK}
\affil[3]{Heilbronn Institute for Mathematical Research, University of Bristol, BS8 1SN, UK}

\date{\today}

\maketitle

\begin{abstract}
Boson Sampling is the problem of sampling from the same output probability distribution as a collection of indistinguishable single photons input into a linear interferometer. 
It has been shown that, subject to certain computational complexity conjectures, in general the problem is difficult to solve classically, motivating optical experiments aimed at demonstrating quantum computational ``supremacy''.
There are a number of challenges faced by such experiments, including the generation of indistinguishable single photons.
We provide a quantum circuit that simulates bosonic sampling with arbitrarily distinguishable particles.
This makes clear how distinguishabililty leads to decoherence in the standard quantum circuit model, allowing insight to be gained.
At the heart of the circuit is the quantum Schur transform, which follows from a representation theoretic approach to the physics of distinguishable particles in first quantisation.
The techniques are quite general and have application beyond boson sampling.
\end{abstract}

\section{Introduction}
\label{sec:intro}

Ever since the idea of quantum computers was first proposed, there has been significant interest in demonstrating how much more powerful they are than their classical counterparts. 
This has been shown for a number of problems, from factoring integers~\cite{shor1997} to solving linear equations~\cite{harrow2009}. 
However, these algorithms require large fault-tolerant quantum computers, which have proven challenging to develop. 
This has led to the search for a so called quantum advantage: a quantum experiment can be demonstrated in a laboratory in the near future, yet is hard to simulate on a classical computer~\cite{harrow2017}. 
Numerous proposals have been made~\cite{bremner2016, boixo2016}, with perhaps the best known being Boson Sampling~\cite{aaronson2011}, which considers the complexity of sampling from the same probability distribution as $n$ indistinguishable photons output from an $m$-mode linear optical interferometer, shown to be determined by matrix permanents~\cite{scheel2008}.

In their breakthrough article, Aaronson and Arkhipov showed that the ability to exactly simulate Boson Sampling in polynomial time would imply that $P^{\#P} = BPP^{NP}$ and the Polynomial Hierarchy would collapse to the third level. 
The same result was also shown for approximately sampling from the same distribution, modulo two conjectures related to the permanents of Gaussian matrices~\cite{aaronson2011}. 
This has led to the development of numerous experiments, with demonstrations as large as five and six photons~\cite{carolan2015, wang2017}.

However, going from these small demonstrations to being able to significantly outperform classical simulation is not a simple task, as recently highlighted by two papers. 
The first, by Neville et al.~\cite{neville2017}, gives an empirical evaluation of approximately sampling from the boson distribution when there are no collisions via a Markov Chain Monte Carlo algorithm, showing that they could produce 250 samples for 30 bosons across 900 modes in under five hours. 
The second, by Clifford and Clifford~\cite{clifford2017}, gives the first algorithm for exact Boson Sampling that is more efficient than computing the entire probability distribution, sampling from $n$ photons across $m$ modes in $O(n2^n + \poly(n,m))$ time and $O(m)$ space.

Any claim of a quantum advantage will furthermore need to take into account the practical challenges associated with Boson Sampling. 
One such challenge, which will be the focus of this work, is the need for indistinguishable single bosons. 
It is also clear that the particles being indistinguishable is to some extent a requirement for the problem to remain computationally hard, as there are known algorithms when the bosons are perfectly distinguishable~\cite{aaronson2014}.
Generating large numbers of indistinguishable photons is infeasible with current techniques, with the probability going like $p^n$ where $p$ is the probability of generating a single photon~\cite{gard2015}. 

In recent years there has been a wealth of theory on linear optical interferometry with partially distinguishable photons~\cite{rohde2015, shchesnovich2015, tichy2015, tillmann2015, menssen2017, laibacher2017}, which has lead to some known results on the computational complexity of sampling. 
Renema et al.~\cite{renema2017} give a classical algorithm for approximate sampling, which they use as a lower bound for how indistinguishable the photons must be in order to achieve a significant quantum advantage. 
Shchesnovich used average mutual fidelity to give an upper bound on the problem's complexity, stating that a sampling experiment is more powerful when the single-photon mode mismatch scales as $O(n^{-3/2})$ for $n$ photons~\cite{shchesnovich2014}. 
Rohde and Ralph~\cite{rohde2012} also briefly discuss this problem, using narrowband filtering to relate distinguishability to loss in order to give an upper bound.

In this paper we explore the problem of sampling from a collection of $n$ partially distinguishable single bosons interacting on a $m$-mode interferometer from the opposite direction, that is, from the perspective of quantum simulation. 
We do so by observing that ideal Boson Sampling is equivalent to sampling from the totally symmetric representation of the unitary group, and that partial distinguishability generalises this to the problem of sampling from arbitrary irreducible representations.
We show that quantum circuits for the Schur transform~\cite{bacon2007} can therefore be used to give a polynomial time quantum algorithm for sampling from the same probability distribution as bosons emerging from a linear interferometer, regardless of distinguishability.

Although it is generally accepted that quantum computers can perform Boson Sampling efficiently, there are few places where such algorithms are actually described explicitly. 
An example of such a method for the ideal (indistinguishable) case is by Aaronson and Arkhipov~\cite{aaronson2011}, using a technique by Reck et al.~\cite{reck1994} to decompose the unitary matrix $U$ into a sequence of $O(d^2)$ optical elements, each implemented via the Solovay-Kitaev theorem~\cite{nielsen2010}. 
Here we show explicitly how nonideal linear optics can be viewed as a quantum computation, allowing a wider range of realistic experimental situations to be considered.
Our approach shows that while the ideal case is intimately related to the symmetric representation of the unitary group through matrix permanents, in the nonideal case all representations play a role.

The paper is structured as follows. 
In Section \ref{sec:preliminary-notes} we give an overview of background material including Boson Sampling, irreducible representations of the unitary and symmetric groups, and Schur-Weyl duality. 
We provide a simple quantum circuit for ideal indistinguishable photon sampling in Section \ref{sec:qc-bs}, before introducing the full circuit for sampling with distinguishable photons in Section \ref{sec:qc-noisy}. 
In Section \ref{sec:uu-duality}, we provide some further explanation as to why these circuits work, via what is known as unitary-unitary duality.
Following this result, we discuss a few interesting consequences: in Section \ref{sec:post-bs}, we show how postselection can be used with this circuit to sample from the indistinguishable distribution when given a distinguishable input; in Section \ref{sec:loss}, we consider how the circuit can be used to simulate Boson Sampling when photons are lost; and in Section \ref{sec:mixture}, we consider the multipartite entanglement of the output in the distinguishable case. 

\section{Preliminaries}
\label{sec:preliminary-notes}

\subsection{Sampling from bosonic distributions}

We start by defining the ideal probability distribution of indistinguishable single bosons interacting on a linear interferometer.
We'll refer to this as bosonic sampling, as it's a bit more general than Aaronson and Arkhipov's Boson Sampling problem as we describe below.
The input is $U \in \mathrm{U}(m)$, an $m\times m$ unitary matrix which describes an $m$-mode linear interferometer, and $S = (S_1,S_2,\dots,S_m)$ with $\sum_{i=1}^m S_i =n$, an ordered list of integers that corresponds to an $n$-boson, $m$-mode occupation describing the input state with $S_i$ bosons in mode $i$. 
Given an output occupation $S'$, define the $n \times n$ (not necessarily unitary) matrix $U_{S',S}$ as that formed by first taking $S_i'$ copies of row $i$ of $U$ in order to create an $m\times n$ matrix, from which we then take $S_j$ copies of column $j$. 
We can then define $\mathcal{D}_{U,S}$, the probability distribution for measuring an $n$-boson $m$-mode occupation $S'$ for interferometer $U$ and input state $S$, as
\begin{equation}\label{eqn:bs-distribution}
\textrm{Pr}_{\mathcal{D}_{U,S}}[S'] = \frac{|\per(U_{S',S})|^2}{\prod_{i=1}^m S_i'! S_i!} ,
\end{equation}
where $\per$ is the matrix permanent.

In a photonics experiment, this setting is described in terms of creation operators $a^\dag_i$ for a photon in mode $i$. 
The initial state is then
\begin{equation}
|S\rangle = \prod_{i=0}^m \frac{(a_i^\dagger)^{S_i}}{\sqrt{S_i}}|0^n\rangle.
\end{equation}
The evolution of the photonic state induced by a linear optical interferometer implementing $U$ can then be expressed as $a_i^\dagger \mapsto \sum_{j = 0}^m U_{i,j}a_j^\dagger$.
Thus single boson states evolve under linear interferometry just as a $m$ dimensional qudit does under a unitary gate $U$ (sometimes called unary encoding).
This suggests how quantum circuits simulating photonics might be constructed, as we'll see.

The problem known as Boson Sampling is that of sampling from this probability distribution in the case where the input occupation is specified as $|1^n 0^{n^2 - n}\rangle = \prod_{i = 1}^n a_{i}^\dagger|0\rangle$, and $U$ is drawn Haar randomly from U$(m=n^2)$\cite{aaronson2011}. 
It was proven by Aaronson and Arkhipov that if there was a polynomial time classical algorithm for sampling from this distribution, then $P^{\#P} = BPP^{NP}$ and the Polynomial Hierarchy would collapse to the third level. 
They also showed similar results for approximate Boson Sampling, where samples are drawn from a distribution $\mathcal{D}_{\mathcal{O}(U, \epsilon)}$ such that $||\mathcal{D}_{\mathcal{O}(U, \epsilon)} - \mathcal{D}_U|| \leq \epsilon$ for all $U \in \textrm{U}(m)$. 

\subsection{Schur-Weyl duality}

Our algorithm can be understood from the perspective of the representation theory of the unitary group U$(m)$ of linear interferometers acting of $m$ modes.
The irreducible representations, or irreps, are intimately related to those of the symmetric group $\textrm{S}_n$ that permutes the particles.
Irreps of both of these groups are indexed by ordered partitions $\lambda = (\lambda_1,\lambda_2,\cdots,\lambda_m)$ of $n$ such that $\lambda_i \geq \lambda_{i+1}$ and $\sum_{i = 1}^m \lambda_i = n$.
We usually suppress zeros in this notation, so for example the totally symmetric irrep $\lambda=(n, 0,\cdots,0)$ is written $(n)$. 
The number of nonzero $\lambda_i$ is called the length of the partition, $\ell(\lambda)$, and only partitions with $\ell(\lambda) \leq m$ occur, which we will assume in all of our expressions that follow.

For the tensor space of $n$ $m$-dimensional qudits, the actions of the symmetric and unitary groups on a state $|\Psi\rangle \in (\mathbb{C}^m)^{\otimes n}$ are explicitly as follows. 
For a permutation $\sigma \in \textrm{S}_n$, the action permutes the tensor factors. 
For a unitary matrix $U \in \textrm{U}(m)$, the action is the $N$-fold tensor product $U^{\otimes n}$. 
It is not hard to see that these two actions commute.
We can now describe Schur-Weyl duality as the following theorem.
\begin{theorem}[Schur-Weyl duality~\cite{rowe2012}]
The Hilbert space of $n$ $m$-dimensional qudits decomposes into irreducible subspaces
\begin{equation}
(\mathbb{C}^m)^{\otimes n} \simeq \bigoplus_{\lambda\vdash n} \mathbb{C}^{\{\lambda\}} \otimes \mathbb{C}^{(\lambda)} ,
\end{equation}
where $\mathbb{C}^{\{\lambda\}}$ carries irrep $\{\lambda\}$ of $\textrm{U}(m)$ and $\mathbb{C}^{(\lambda)}$ carries irrep $(\lambda)$ of $\textrm{S}_n$, and $\simeq$ indicates a change of basis is involved.
The dimension of irrep $(\lambda)$ can be viewed as the multiplicity of irrep $\{\lambda\}$, and vice versa.
\end{theorem}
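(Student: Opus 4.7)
The plan is to deduce the decomposition from the double commutant theorem applied to the two commuting group actions. Let $V = \mathbb{C}^m$ and define subalgebras of $\mathrm{End}(V^{\otimes n})$,
\begin{equation*}
\mathcal{A} = \mathrm{span}_{\mathbb{C}}\{\sigma : \sigma \in \mathrm{S}_n\}, \qquad \mathcal{B} = \mathrm{span}_{\mathbb{C}}\{U^{\otimes n} : U \in \mathrm{U}(m)\},
\end{equation*}
where $\sigma$ acts by permuting tensor factors. Commutativity of the two actions, noted just above the statement, immediately yields $\mathcal{A} \subseteq \mathcal{B}'$ and $\mathcal{B} \subseteq \mathcal{A}'$, writing $(\cdot)'$ for the commutant in $\mathrm{End}(V^{\otimes n})$; the goal is to upgrade both inclusions to equalities.

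The essential step is $\mathcal{A}' \subseteq \mathcal{B}$. Identifying $\mathrm{End}(V^{\otimes n}) \cong \mathrm{End}(V)^{\otimes n}$, the commutant $\mathcal{A}'$ is precisely the $\mathrm{S}_n$-invariant subspace $\mathrm{Sym}^n\bigl(\mathrm{End}(V)\bigr)$. A standard polarization identity gives that $\mathrm{Sym}^n(W)$ is spanned over $\mathbb{C}$ by pure powers $w^{\otimes n}$, $w \in W$, so $\mathcal{A}'$ is spanned by operators $X^{\otimes n}$ with $X \in \mathrm{End}(V)$. Since every $m \times m$ complex matrix is a $\mathbb{C}$-linear combination of unitaries, each such $X^{\otimes n}$ lies in $\mathcal{B}$, giving $\mathcal{A}' = \mathcal{B}$ and, by symmetry, $\mathcal{B}' = \mathcal{A}$.

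Next I would invoke the structure theorem for mutually centralizing semisimple subalgebras: $\mathcal{A}$ is semisimple by Maschke's theorem, and $\mathcal{B}$ is semisimple because it is closed under the adjoint operation. This produces a decomposition
\begin{equation*}
V^{\otimes n} \simeq \bigoplus_\lambda W_\lambda \otimes M_\lambda,
\end{equation*}
in which $W_\lambda$ ranges over the distinct $\mathcal{B}$-irreps appearing, each $M_\lambda$ is the associated $\mathcal{A}$-irrep, and each appears with multiplicity equal to the dimension of its partner. To finish, I would label the summands: highest-weight theory for $\mathrm{U}(m)$ identifies the $\mathcal{B}$-irreps appearing with partitions $\lambda \vdash n$ of length at most $m$, while character theory of $\mathrm{S}_n$ (equivalently, a direct computation with Young symmetrizers applied to $V^{\otimes n}$) matches each $M_\lambda$ with the $\mathrm{S}_n$-irrep of the same shape.

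The hard step is the polarization/density argument establishing $\mathcal{A}' \subseteq \mathcal{B}$, because it is what actually transfers information from tensor powers of a single unitary to all $\mathrm{S}_n$-equivariant operators. Once that spanning property is in hand, everything else is a formal consequence of the double commutant theorem together with standard highest-weight bookkeeping to match the two sides of the duality on partitions.
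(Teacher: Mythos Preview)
The paper does not supply its own proof of Schur--Weyl duality; the theorem is stated as background with a citation to~\cite{rowe2012}, so there is no in-paper argument to compare against. Your double-commutant strategy is the standard textbook route and is sound in outline.

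There is, however, a real gap at the step ``since every $m\times m$ complex matrix is a $\mathbb{C}$-linear combination of unitaries, each such $X^{\otimes n}$ lies in $\mathcal{B}$.'' Writing $X=\sum_i c_iU_i$ and expanding gives $X^{\otimes n}$ as a sum of \emph{mixed} tensors $U_{i_1}\otimes\cdots\otimes U_{i_n}$, which are not of the form $U^{\otimes n}$; linear spanning of $\mathrm{End}(V)$ by unitaries therefore does not transfer directly to the $n$th tensor power (already for $n=2$, $m=1$ is fine but try $m=2$ with a basis of two unitaries). The standard repair is a Zariski-density (unitarian-trick) argument: the map $X\mapsto X^{\otimes n}$ is polynomial in the entries of $X$, and any complex polynomial on $M_m(\mathbb{C})$ vanishing on $\mathrm{U}(m)$ vanishes identically because $\mathrm{U}(m)$ is a real form of $\mathrm{GL}(m,\mathbb{C})$. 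Hence any linear functional annihilating $\{U^{\otimes n}:U\in\mathrm{U}(m)\}$ also annihilates $\{X^{\otimes n}:X\in\mathrm{End}(V)\}$, and the two spans coincide. With that patch in place, $\mathcal{A}'=\mathcal{B}$ follows and the rest of your argument goes through.
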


There is an efficient quantum circuit that implements the Schur-Weyl decomposition.
Given a state $|\Psi\rangle \in (\mathbb{C}^m)^{\otimes n}$ in the computational basis, this circuit, which we label $W$, performs the transformation
\begin{equation}
W|\Psi\rangle
 = \sum_{\lambda \vdash n} \, \sum_{q_{\lambda}} \, \sum_{p_\lambda}C^\lambda_{q_\lambda,p_\lambda}|\lambda\rangle|q_\lambda\rangle|p_\lambda\rangle , 
\end{equation}
where $\lambda$ indexes the irrep, $q_\lambda$ and $p_\lambda$ index bases of irreps $\{\lambda\}$ and $(\lambda)$ respectively, and $C^\lambda_{q_\lambda,p_\lambda}$ is a generalised Clebsch-Gordan coefficient.
For example, the unitary action of U$(m)$ in this basis is
\begin{equation}
U : |\lambda\rangle |q_\lambda\rangle |p\rangle \rightarrow |\lambda\rangle |U\cdot q_\lambda\rangle |p\rangle := |\lambda\rangle \left( \sum_{q_\lambda'} U^\lambda_{q_\lambda, q_\lambda'} |q_\lambda'\rangle \right) |p\rangle ,
\end{equation}
where $U^\lambda$ is the irreducible unitary matrix corresponding to $U \in \mathrm{U}(m)$.
It was proven by Bacon, Chuang and Harrow that this circuit runs in polynomial time in terms of $n$, $m$ and $\log(\delta^{-1})$, where $\delta$ is an accuracy parameter~\cite{bacon2007}.

\section{A quantum circuit for ideal bosonic sampling}
\label{sec:qc-bs}

Here we describe a quantum circuit for bosonic sampling when the bosons are perfectly indistinguishable (and free from other errors such as loss, which we'll discuss later on). 
This circuit samples with accuracy $\delta + \epsilon$ and runs in polynomial time and space in terms of $m$, $n$, $\log(\delta^{-1})$ \& $\log(\epsilon^{-1})$. 
Here and throughout the paper, $\delta$ describes the precision with which we are able to approximate the Schur transform via the Bacon-Chuang-Harrow circuit, and $\epsilon$ the accuracy to which we can approximate the unitary matrix $U$ via the Solovay-Kitaev theorem \cite{nielsen2010,dawson2006}.
Note that although the Solovay-Kitaev construction can involve exponential resources in terms of $m$, this can be avoided by first performing a Hurwitz (or Reck) decomposition into smaller unitaries~\cite{hurwitz1897, reck1994, barenco1995}.

The goal of the circuit is to sample from the totally symmetric subspace of $(\mathbb{C}^m)^{\otimes n}$, where the interferometer $U \in \textrm{U}(m)$ acts as the totally symmetric irrep of the unitary group given by $\{ \lambda = (n)\}$.
In order to construct symmetrised states given the input occupation $S$, we use the (inverse) Schur transform. 
The Schur circuit $W$ specifies irreps of U$(m)$ in the Gelfand-Zeitlin (GZ) basis, so we need a way to map between these states and occupations. 
We can do this via the pattern weight $\nu = (\nu_1,\cdots,\nu_d)$, which can be related to a GZ pattern for any irrep~\cite{alex2011}. 
For the fully symmetric irrep, the pattern weight is unique for each GZ state and there is a particularly simple mapping from occupations to symmetric GZ states in this case, namely $\nu_i = S_i$~\cite{rowe1999}; this has also been referred to as a quantum analog of a classical ``type''~\cite{harrow2005}.
Thus, we have an efficient way to identify an input occupation $S$ with a GZ basis state $q_{(n)}$.

We can now see how a circuit for indistinguishable boson sampling would work. 
Given an input occupation $S$, we prepare the corresponding state $|q_{(n)}\rangle$ of the $q$-register. 
To use the inverse Schur transform, we append to this input state a quantum register for the irrep $|(n)\rangle$, and another for the symmetric group index $|p_{(n)}\rangle$. 
Note that there is only one possible state for the $p_{(n)}$ register, because the fully symmetric irrep of the symmetric group is one dimensional; thus $p_{(n)}=1$ always.
The inverse Schur transformation $W^\dag$ takes this state to a symmetric state of $n$ qudits in $(\mathbb{C}^m)^{\otimes n}$.
 In this tensor space, we now need only apply the interferometer matrix $U$ to each qudit in parallel as the circuit $U^{\otimes n}$.
This can be done with accuracy $\epsilon$ in $O(\log^c(1/\epsilon))$ time via the Solovay-Kitaev theorem~\cite{nielsen2010, dawson2006}. 
Finally, we apply the Schur transform again and measure the $q$-register to get a sample $q_{(n)}'$, from which we can easily compute the pattern weight/type to get an output occupation $S'$.

A complete version of the quantum circuit for Boson Sampling is given in Algorithm \ref{alg:bs}, as well as a circuit description in Figure \ref{fig:bs}.
\begin{algorithm}
\SetKwInOut{Input}{input}\SetKwInOut{Output}{output}
\Input{A matrix $U \in \text{U}(m)$ and \\
an $n$-boson $m$-mode occupation $S$.}
\Output{An $n$-boson $m$-mode occupation $S'$.}
\BlankLine
Map $S$ to $q$-register basis index $q_{(n)}$\;
Prepare input state $|\lambda=(n)\rangle|q_{(n)}\rangle|p_{(n)}=1\rangle$\;
Apply $W^\dagger$, producing a state $|\Psi\rangle \in (\mathbb{C}^m)^{\otimes n}$\;
Synthesize $U$ via Solovay-Kitaev\;
Execute $U$ on each qudit in parallel, implementing $U^{\otimes n}$\;
Apply $W$, producing state $|(n)\rangle|U\cdot q_{(n)}\rangle|1\rangle$\; 
Measure the $q$-register to obtain a sample $q_{(n)}'$\;
Map $q_{(n)}'$ to an occupation $S'$\;
\Return{$S'$}
\caption{
A quantum circuit for sampling from the same distribution as that produced by indistinguishable bosons in a linear interferometer.}
\label{alg:bs}
\end{algorithm}

\begin{figure}
\includegraphics[width=\linewidth]{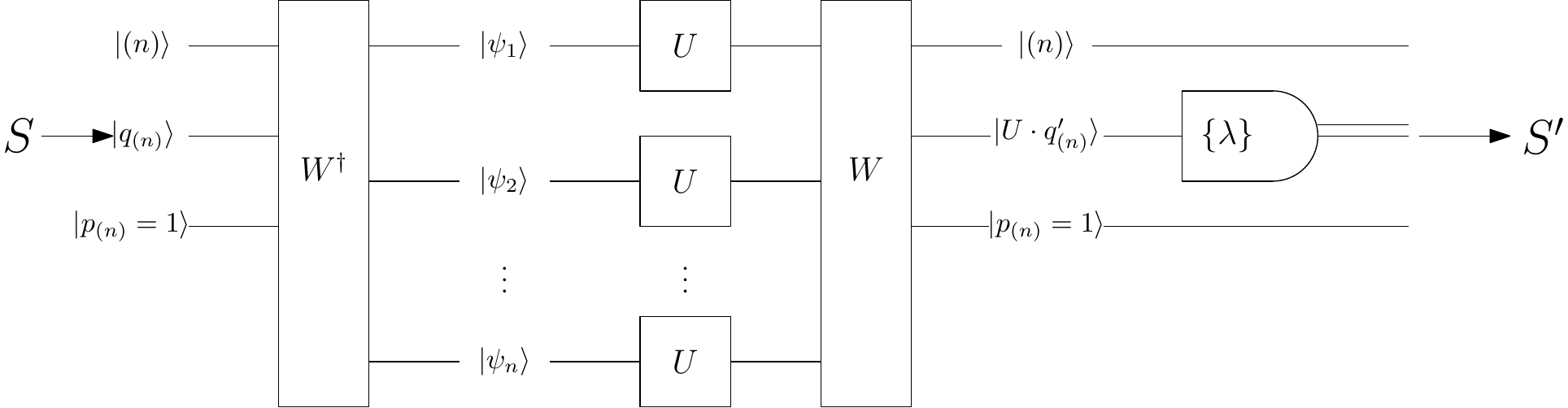}
\caption{
A quantum circuit for Algorithm \ref{alg:bs}. 
Note that the measurement of the $q$-register returns a string that we associate to a GZ basis state.}
\label{fig:bs}
\end{figure}

We can demonstrate correctness by showing that this distribution does indeed match the one we have for sampling from indistinguishable bosons. 
We start with the input occupation $S$. 
After mapping this to a unitary irrep state $|q_{(n)}\rangle$ and applying $W^\dagger$, we end up with the symmetrized state
\begin{equation}\label{eq:IdealState}
W^\dagger|q_{(n)}\rangle
 = \frac{1}{\sqrt{n! \prod_{i=1}^m S_i!}}\sum_{\sigma \in  \textrm{S}_n}\sigma|s\rangle,
\end{equation}
where $|s\rangle$ can be chosen to be any computational basis state with occupation $S$, that is, with $S_i$ of the qudits in state $i$.
Arguing similarly for the output $S'$, we see that the probability of the algorithm outputting $q_{(n)}'$ given inputs $U$ and $S$ is
\begin{align}
\textrm{Pr}[q_{(n)}']
&= \vert \langle q_{(n)}'\vert W U^{\otimes n} W^\dag |q_{(n)}\rangle \vert^2 \\
&= \left|\frac{1}{n!\sqrt{\prod_{i=1}^m S'_i! S_i!}}\sum_{\sigma, \sigma'\in \textrm{S}_n}\langle s'|{\sigma}'^\dag{U}^{\otimes n}{\sigma}|s\rangle\right|^2\\
&= \frac{1}{\prod_{i=1}^m S'_i! S_i!}\left|\frac{1}{n!}\sum_{\sigma, \sigma'\in \textrm{S}_n}\bigotimes_{k=1}^n \langle s'_{\sigma'^{-1}(k)}|{U}|s_{\sigma^{-1}(k)}\rangle\right|^2\\
&= \frac{1}{\prod_{i=1}^m S'_i! S_i!}\left|\frac{1}{n!}\sum_{\sigma, \sigma'\in \textrm{S}_n}\prod_{k=1}^n U_{s'_{k},s_{\sigma^{-1}(\sigma'(k))}}\right|^2\\
&= \frac{1}{\prod_{i=1}^m S'_i! S_i!}\left|\sum_{\tau \in \textrm{S}_n}\prod_{k=1}^n U_{s'_{k},s_{\tau(k)}}\right|^2\\
&= \frac{|\per(U_{S',S})|^2}{\prod_{i=1}^m S'_i! S_i!}.
\end{align}
Thus the output probability distribution matches the one given in Eq.~(\ref{eqn:bs-distribution}).
We also see that Schur-Weyl duality implies
\begin{equation}
U^{(n)}_{q_{(n)}, q_{(n)}'} = \frac{\per(U_{S',S})}{\sqrt{\prod_{i=1}^m S'_i! S_i!}}.
\label{eqn:symmetric_action}
\end{equation}
That is, the totally symmetric representation of the unitary group can be constructed from permanents of $U_{S',S}$ matrices~\cite{bhatia1997}.

As for the complexity of this circuit, the mapping from bosons to $q_{(n)}$ states and back can be done in polynomial time and space in terms of $n$~\cite{rowe1999}, Schur-Weyl duality takes polynomial time and space in terms of $d$, $n$ \& $\log(\delta^{-1})$ and the Solovay-Kitaev theorem allows $U$ to be implemented in polynomial time and space. 
From this and the earlier points discussed in this section, we find that Theorem \ref{thm:bs} holds.
\begin{theorem}
Algorithm \ref{alg:bs} performs ideal bosonic sampling with approximation $\delta+\epsilon$ in polynomial time and space in terms of $m$, $n$, $\log(\delta^{-1})$ and $\log(\epsilon^{-1})$.
\label{thm:bs}
\end{theorem}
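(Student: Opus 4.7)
The plan is to split Theorem \ref{thm:bs} into a correctness claim (the circuit samples from a distribution within total variation $\delta+\epsilon$ of $\mathcal{D}_{U,S}$) and a resource claim (each step is polynomial in the stated parameters). First I would settle correctness for the idealised circuit, where $W$ and $U$ are implemented exactly. The chain of equalities displayed in the excerpt already shows that $\Pr[q_{(n)}'] = |\per(U_{S',S})|^2/\prod_i S'_i!S_i!$, and since the pattern weight $\nu_i = S_i$ furnishes a bijection between occupations and Gelfand-Zeitlin labels in the totally symmetric irrep, translating $q_{(n)}'$ back to $S'$ preserves this distribution, recovering Eq.~\eqref{eqn:bs-distribution} exactly.

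Next I would propagate the two approximation errors. Let $\tilde W$ denote a Bacon-Chuang-Harrow implementation with $\lVert \tilde W - W\rVert \le \delta'$, and $\tilde U$ a Solovay-Kitaev synthesis with $\lVert \tilde U - U\rVert \le \epsilon'$. Then $\lVert \tilde U^{\otimes n} - U^{\otimes n}\rVert \le n\epsilon'$, and by submultiplicativity
\begin{equation}
\lVert \tilde W\tilde U^{\otimes n}\tilde W^\dagger - W U^{\otimes n} W^\dagger \rVert \le 2\delta' + n\epsilon'.
\end{equation}
The standard bound converting operator-norm distance between unitaries to total variation distance between the induced measurement distributions preserves this quantity up to a constant. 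Choosing $\delta' = \delta/4$ and $\epsilon' = \epsilon/(2n)$ yields the claimed accuracy $\delta+\epsilon$; since these rescalings are absorbed by the $\log(\delta^{-1})$ and $\log(\epsilon^{-1})$ dependence, they do not affect the asymptotic resource count.

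For the complexity I would tally the cost of each step of Algorithm \ref{alg:bs}. The pattern-weight maps $S \leftrightarrow q_{(n)}$ and $q_{(n)}' \leftrightarrow S'$ are $\poly(n,m)$ via the construction of Rowe et al. The registers $|\lambda\rangle|q_\lambda\rangle|p_\lambda\rangle$ have $\poly(n,\log m)$ qubits, so preparation is polynomial. Both $W$ and $W^\dagger$ cost $\poly(n,m,\log(1/\delta))$ by Bacon-Chuang-Harrow. To synthesise $U$ I would first apply a Hurwitz/Reck decomposition into $O(m^2)$ two-mode unitaries; each lives in a fixed-dimensional group and so admits Solovay-Kitaev approximation in time $\mathrm{polylog}(1/\epsilon)$. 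Applying $U^{\otimes n}$ in parallel therefore costs $O(nm^2\,\mathrm{polylog}(1/\epsilon))$. Summing gives the stated $\poly(m,n,\log(\delta^{-1}),\log(\epsilon^{-1}))$ bound on both time and space.

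The main obstacle is the handling of $U$: a direct application of Solovay-Kitaev to $\mathrm{U}(m)$ would require resources exponential in $m$, so the Hurwitz/Reck step is essential rather than cosmetic, and must precede the synthesis. A secondary subtlety is to ensure that the $n\epsilon$ factor arising from the tensor power does not spoil the claimed scaling; this is why Solovay-Kitaev is invoked at accuracy $\epsilon/n$ and the logarithmic overhead is folded into the $\log(\epsilon^{-1})$ term. Apart from these two points, every remaining ingredient is quoted from the earlier references.
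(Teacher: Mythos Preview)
Your proposal is correct and follows essentially the same approach as the paper: establish exact correctness via the permanent identity already derived in the text, then argue polynomial resources step by step (pattern-weight maps via Rowe et al., Schur transform via Bacon--Chuang--Harrow, $U$ via Hurwitz/Reck followed by Solovay--Kitaev). Your treatment is in fact more careful than the paper's on the error propagation---the paper simply asserts the $\delta+\epsilon$ accuracy without tracking the $n\epsilon'$ blowup from the tensor power or the factor of two from the pair of Schur transforms---so your rescaling argument fills a gap the authors left implicit.
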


We observe that this circuit could be simplified in several ways.
Firstly, the entire Schur transform is not required because in the ideal case the problem is confined to the symmetric irrep only.
As we will see, this is not the case in the non-ideal (distinguishable) case.

Another simplification that we'll use in the next section is the following observation.
If rather than applying step 5 onwards in Algorithm \ref{alg:bs} we simply measure the registers in the computational basis and return the result rewritten as an occupation, we also end up with a distribution given by the permanents. 
The probability of measuring a particular computational basis state $|s'\rangle \in (\mathbb{C}^m)^{\otimes n}$ is
\begin{align}
\textrm{Pr}[s'] 
&= \frac{1}{n!\prod_{i=1}^m (S_i!)}\left|\langle s'|U^{\otimes n}\sum_{\sigma \in \textrm{S}_n}\sigma|s\rangle\right|^2\\
&= \frac{1}{n!\prod_{i=1}^m (S_i!)}\left|\sum_{\sigma \in \textrm{S}_N}\bigotimes_{k = 1}^n\langle s_k'|U|s_{\sigma^{-1}(k)}\rangle\right|^2\\
&= \frac{\left|\sum_{\sigma \in \textrm{S}_n}\prod_{k = 1}^n U_{s_k', s_{\sigma^{-1}(k)}}\right|^2}{n!\prod_{i=1}^m (S_i!)} \\
&= \frac{|\per(U_{S',S})|^2}{n!\prod_{j=1}^m (S_i!)}.
\end{align}
The probability of measuring an occupation $S'$ is equal to summing over the probabilities of all states $|s'\rangle$ of type $S'$, of which there are $n!/\prod_{i=1}^m (S_i'!)$.
We will consider both versions of this circuit in subsequent sections on sampling from distinguishable bosonic distributions.

\section{A quantum circuit for arbitrarily distinguishable bosonic sampling}
\label{sec:qc-noisy}

We now turn to the question of sampling from a distribution of partially distinguishable bosons, (again with no loss).
Distinguishability is modelled as correlation between the modes of the bosons' `System'  degrees of freedom, and new modes corresponding to `Label' degrees of freedom.
In order to accommodate the possibility of all $n$ bosons being completely distinguishable, the number of Label modes must be $n$ so that each boson can be correlated to a unique Label.
Thus there are now a total of $mn$ modes in the problem.
Physically we can think of the System degree of freedom as the spatial modes available to the bosons, and the Label as, say, temporal modes -- however the model is completely general. 

On the aggregate Hilbert space we have the same setup as the ideal case, but now by tracing out the Label register we see that distinguishability can lead to decoherence of the System qudits.
We assume that an interferometer implementing a $m \times m$ unitary matrix acts only on the $m$ System modes, while the Label remains unchanged. 
In this model, as well as receiving a unitary matrix $U$ as input, we also receive an $m \times n$ occupation $T$ which describes how many bosons are in System mode $i$ and Label mode $j$. 
This can be described in terms of creation operators as
\begin{equation}
|T\rangle = \prod_{i=1}^m \prod_{j = 1}^n \frac{(a_{i,j}^\dagger)^{T_{i,j}}}{\sqrt{T_{i,j}!}}|0\rangle.
\end{equation}
Since the Labels are assumed to be unaffected by the interferometer, the creation operators evolve as $a_{i, j}^\dagger \rightarrow \sum_{k = 1}^m U_{i,k}a_{k, j}^\dagger$.

Our technique for handling distinguishable bosons is similar to the ideal case where we consider the fully symmetric irrep $\{(n)\}$ of the Unitary group. 
However, the introduction of label degrees of freedom means that we can no longer map onto the $\{(n)\}$ irrep of $\textrm{U}(m)$. 
Instead we the $\{(n)\}$ irrep of the aggregrate unitary group $\textrm{U}(mn)$, by mapping a boson in spatial mode $i$ and temporal mode $j$ to pattern weight $(i,j) \equiv (n(i-1) + j)$.

When we apply $W^\dagger$ to the input, we find the same Young symmetrizer as before, but now output a state $|\Psi\rangle \in (\mathbb{C}^m\otimes\mathbb{C}^n)^{\otimes n}$. 
We can think of this as the ideal case but now with each qudit being $mn$ dimensional.
Furthermore each System-Label qudit can be viewed as bipartite, with a $m$-dimensional qudit describing the System degree of freedom and another $n$-dimensional qudit describing the Label. 
We can therefore split the $n$ System-Label qudits into two registers, with the interferometer action and boson detection taking place on only the System register, while the Label register `eavesdrops'.

A complete description of the circuit is given in Algorithm \ref{alg:noisy-bs}, with a circuit diagram given in Figure \ref{fig:noisy-bs}.
\begin{algorithm}
\SetKwInOut{Input}{input}\SetKwInOut{Output}{output}
\Input{A matrix $U \in \textrm{U}(m)$ and \\ an $n$-boson $mn$-mode occupation $T$.}
\Output{An $n$-boson $m$-mode occupation $S'$.}
\BlankLine
Map $T$ to $q$-register basis index $q_{(n)}$ (for U$(mn)$)\;
Prepare input state $|\lambda=(n)\rangle|q_{(n)}\rangle|p_{(n)}=1\rangle$\;
Apply $W^\dagger$, producing a state $|\Psi\rangle \in (\mathbb{C}^{m \times n})^{\otimes n}$\;
Rearrange into two (possibly entangled) quantum registers $|\Psi_{\textrm{Sys}}\rangle = |\psi_{\textrm{Sys}, 1}\rangle\dots|\psi_{\textrm{Sys}, n}\rangle \in (\mathbb{C}^m)^{\otimes n}$ and $|\Psi_{\textrm{Lab}}\rangle = |\psi_{\textrm{Lab}, 1}\rangle\dots|\psi_{\textrm{Lab}, n}\rangle \in (\mathbb{C}^{n})^{\otimes n}$\;
Synthesize $U$ via Solovay-Kitaev\;
Execute $U$ on the $|\Psi_{\textrm{Sys}}\rangle$ qudits in parallel, implementing  $U^{\otimes n} \otimes \Id_\mathrm{Lab}$\;
Measure the System in the computational basis to obtain a sample $s'$\;
Map $s'$ to an occupation $S'$ ($S'_i = \#$ of qudits in state $1\leq i\leq m$)\;
\Return $S'$
\caption{A quantum circuit for sampling from (essentially) the same distribution at that produced by distinguishable bosons in a linear interferometer.
In order to sample from exactly the same distribution, instead of step 7 one could transform back to the Schur basis by applying $W$ on the System and sample the $q$-register, or one could perform some post-processing as discussed at the end of the previous section.}
\label{alg:noisy-bs}
\end{algorithm}

\begin{figure}
\includegraphics[width=\linewidth]{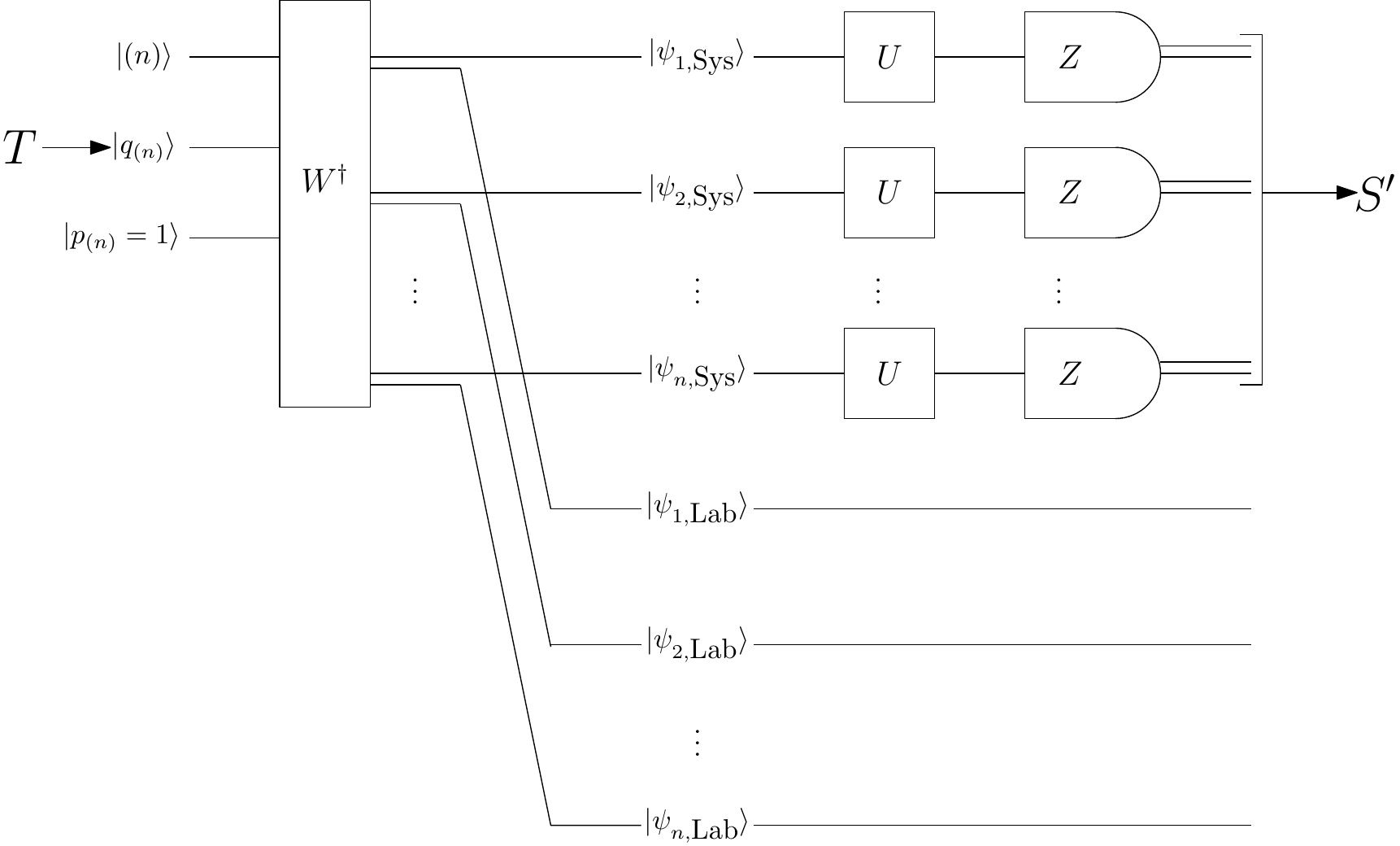}
\caption{The quantum circuit described in Algorithm \ref{alg:noisy-bs}. 
For simplicity we forego the second Schur transform as discussed in the previous section and measure in the computational ($Z$) basis. 
Note that we only sample from the System qudits, effectively tracing out the Label qudits.}
\label{fig:noisy-bs}
\end{figure}

To see that this distribution matches that of partially distinguishable bosons, we will compare with the results of Tichy~\cite{tichy2015}.
There, each boson is assigned an ``internal'' state $|\Phi_i\rangle$, where $i=1, \cdots, m$.
Thus, in our terminology, every boson in System mode $i$ has a Label state given by $\Phi_i$.
The probability distribution given there for sampling from partially distinguishable bosons is (in our notation)
\begin{equation}\label{eq:TichyDist}
\textrm{Pr}[S'] =  \frac{1}{\prod_{i=1}^m S_i! S'_i!} \sum_{\tau, \tau' \in \textrm{S}_n} \prod_{k=1}^n U_{s'_k,s_{\tau(k)}} U^\ast_{s'_k,s_{{\tau'}(k)}} \mathcal{S}_{{\tau'(k)},{\tau(k)}},
\end{equation}
where $\mathcal{S}_{k,l} = \langle \Phi_{s_k}|\Phi_{s_l}\rangle$ is a Hermitian positive-definite $n \times n$ distinguishability matrix, $s_k$ gives the mode occupied by particle $k$, and we've included a factor to account for the possibility of multiple bosons in the same output mode.
In order to connect our model with this model of distinguishability, one simply needs to take superpositions of $mn$-mode input occupations $T$ in such a way as to realise the Label states $\Phi_i$.
This is always possible since the space of internal states, span$\{\Phi_i\}_{i=1}^m$, can always be embedded in the Label space $(\mathbb{C}^n)^{\otimes n}$.
For example, consider two bosons in two System modes where one boson is in System mode 1 and has internal state $|1\rangle$ (corresponding to Label mode 1), and the other boson is in System mode 2 and has internal state $|\Phi_2\rangle=\alpha|1\rangle+\beta|2\rangle$.
This is represented as the following superposition of System-Label occupations (recall rows of $T$ correspond to the System and columns to the Label):
\begin{align}
\alpha \Ket{\begin{matrix} 1&0\\1&0 \end{matrix}} + \beta \Ket{\begin{matrix} 1&0\\0&1 \end{matrix}} .
\end{align}
Given a distinguishability matrix $\mathcal{S}_{k,l}$, in this way we can prepare a corresponding superposition of occupations $T$ at step 2 of the algorithm.
(We can in fact consider more general partially distinguishable situations of bosons with different Label states in the same System mode.)

After step 3 of the algorithm, the state can be written as
\begin{align}
|\Psi\rangle_{\textrm{Sys,Lab}}
 &= \frac{1}{\sqrt{n!\prod_{i=1}^m S_i!}}\sum_{\sigma\in\textrm{S}_n}\sigma|s\rangle\sigma\bigotimes_{k=1}^n|\Phi_{s_k}\rangle\\
 &= \frac{1}{\sqrt{n!\prod_{i=1}^m S_i!}}\sum_{\sigma\in\textrm{S}_n}\sigma|s\rangle\bigotimes_{k=1}^n|\Phi_{s_{\sigma^{-1}(k)}}\rangle\\
\end{align}
where $|s\rangle=\bigotimes_{k=1}^n|s_k\rangle$ and $\bigotimes_{k=1}^n|\Phi_{s_k}\rangle$ are quantum registers describing the System and Label. 
Tracing out the Label register yields the reduced density matrix
\begin{align}
\rho_{\textrm{Sys}}
 &= \textrm{Tr}_{\textrm{Lab}}[|\Psi\rangle_{\textrm{Sys,Lab}}\langle\Psi|]\\
 &= \frac{1}{n!\prod_{i=1}^m S_i!}\sum_{\sigma,\sigma'\in\textrm{S}_n}\sigma|s\rangle\langle s|\sigma'^\dagger \prod_{k=1}^n \langle\Phi_{s_{\sigma'^{-1}(k)}}|\Phi_{s_{\sigma^{-1}(k)}}\rangle\\
 &=\frac{1}{n!\prod_{i=1}^m S_i!}\sum_{\sigma,\sigma'\in\textrm{S}_n}\sigma|s\rangle\langle s|\sigma'^\dagger \prod_{k=1}^n \mathcal{S}_{\sigma'^{-1}(k),\sigma^{-1}(k)}.
\end{align}
When we apply the action of the interferometer $U$ on the $m$-dimensional System qudits, the probability of measuring a state $|s'\rangle$ after step 7 is
\begin{align}
\textrm{Pr}[|s'\rangle]
 &= \textrm{Tr}[|s'\rangle\langle s'|U^{\otimes n}\rho_{\textrm{Sys}}(U^\dagger)^{\otimes n}]\\
 &= \frac{\sum_{\sigma, \sigma' \in \textrm{S}_n}\langle s'|U^{\otimes n}\sigma|s\rangle\langle s|\sigma'^\dagger(U^\dagger)^{\otimes n}|s'\rangle \prod_{k=1}^n\mathcal{S}_{\sigma'^{-1}(k),\sigma^{-1}(k)}}{n!\prod_{i=1}^m S_i!}\\
 &= \frac{\sum_{\sigma, \sigma' \in \textrm{S}_n} \prod_{k=1}^n \langle s'_k|U|s_{\sigma^{-1}(k)}\rangle\langle s_{\sigma'^{-1}(k)}|U^\dagger|s'_k\rangle\mathcal{S}_{\sigma'^{-1}(k),\sigma^{-1}(k)}}{n!\prod_{i=1}^m S_i!}\\
 &= \frac{\sum_{\tau, \tau' \in \textrm{S}_n} \prod_{k=1}^n U_{s'_k,s_{\tau(k)}} U^*_{s'_k,s_{\tau'(k)}} \mathcal{S}_{\tau'(k),\tau(k)}}{n!\prod_{i=1}^m S_i!}.
\end{align}
Up to a factor, this is the desired probability distribution of Eq.(\ref{eq:TichyDist}).
As discussed at the end of the last section, this factor could be handled either by applying a second Schur transform on the System and sampling the $q$-register, or by classical post processing.

Counting resources goes much the same as it did in the ideal case, though now we have $mn$-dimensional qudits that are made up of pairs of $m$- and $n$-dimensional qudits.
Separating these System and Label registers in step 3 can be done with polynomial resources, as can the unitary transformation on the input $q$-register that prepares the input state of arbitrary distinguishability.
From this and the points above, we find that Theorem \ref{thm:noisy-bs} holds.
\begin{theorem}
Algorithm \ref{alg:noisy-bs} samples from the distinguishable bosonic distribution with approximation $\delta+\epsilon$ when the distinguishability of the input bosons is known. 
The circuit runs in polynomial time and space in terms of $m, n, \log(\delta^{-1})$ and $\log(\epsilon^{-1})$.
\label{thm:noisy-bs}
\end{theorem}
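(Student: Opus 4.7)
The plan is to decompose the proof into three pieces: (i) correctness of the output distribution, (ii) polynomial complexity, and (iii) error accumulation giving the $\delta+\epsilon$ approximation.

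For correctness, I would lean on the detailed calculation already carried out between Eqs.~(\ref{eq:TichyDist}) and the probability expression for $\Pr[|s'\rangle]$ above. The only loose end is the multiplicative factor relating my per-$|s'\rangle$ probability to Tichy's per-occupation $\Pr[S']$. First I would explicitly sum the computational-basis probabilities over all $|s'\rangle$ with type $S'$, noting there are $n!/\prod_i S'_i!$ such basis vectors; multiplying by this count removes one factor of $\prod_i S'_i!$ in the denominator and reproduces Eq.~(\ref{eq:TichyDist}) exactly. Alternatively, for the variant of the algorithm that applies a second $W$ on the System register and measures the $q$-register, I would invoke the indistinguishable calculation from Section~\ref{sec:qc-bs} (the chain culminating in the permanent formula) pointwise on the decohered state $\rho_{\textrm{Sys}}$, so that each $\sigma,\sigma'$ term is contracted to a pair of permanents weighted by $\prod_k \mathcal{S}_{\tau'(k),\tau(k)}$. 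To cover the most general input distinguishability promised in the theorem statement (``when the distinguishability is known''), I would also argue that any positive semidefinite $\mathcal{S}$ admits a factorisation $\mathcal{S}_{k,l}=\langle\Phi_{s_k}|\Phi_{s_l}\rangle$ with $\Phi_i\in(\mathbb{C}^n)^{\otimes n}$, and that the superposition of $T$-occupations preparing this Label state can be written down from the Gram decomposition.

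For complexity I would assemble the cost piece by piece: mapping $T$ to the $q_{(n)}$ index for $\mathrm{U}(mn)$ uses the type/pattern-weight identification of Rowe and Repka cited earlier and is $\poly(n,m)$; preparing the superposition over $T$ that realises an arbitrary $\mathcal{S}$ is a $\poly(n,m)$ unitary on the $q$-register since the dimension of $\{(n)\}$ for $\mathrm{U}(mn)$ is $\binom{mn+n-1}{n}$, which only affects the qudit count logarithmically; the Bacon--Chuang--Harrow Schur transform on $n$ qudits of dimension $mn$ runs in $\poly(n,mn,\log \delta^{-1})=\poly(n,m,\log\delta^{-1})$; the relabelling in step~4 is just a permutation of qubit registers; and Solovay--Kitaev (after a Reck/Hurwitz pre-decomposition to avoid the exponential-in-$m$ blow-up, as noted in Section~\ref{sec:qc-bs}) synthesises $U$ to accuracy $\epsilon$ in $\poly(m,\log\epsilon^{-1})$, with the parallel $U^{\otimes n}$ costing an additional factor of $n$. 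All space usage is polynomial for the same reasons.

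For the error bound I would combine the two approximations by the standard triangle inequality on total variation distance: the ideal-circuit output matches Tichy's distribution exactly by the correctness argument, and replacing $W$ and $W^\dagger$ by their $\delta$-accurate Bacon--Chuang--Harrow implementations and $U^{\otimes n}$ by its Solovay--Kitaev synthesis perturbs the final state in operator norm by at most $O(\delta+\epsilon)$, which bounds the sampling distance by the same quantity. The main obstacle I expect is bookkeeping-level rather than conceptual: making sure the probability factor reconciliation between the per-$s'$ and per-$S'$ formulas is clean in the general (non-product) Label case, and verifying that the preparation of an arbitrary distinguishability-encoding superposition of $T$'s really is efficient in $m$ and $n$ rather than in the dimension of $\{(n)\}$ of $\mathrm{U}(mn)$; both of these are resolved, but they are the places where a careful reader would want the details spelled out.
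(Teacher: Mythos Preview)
Your proposal follows essentially the same route as the paper: correctness is established by the explicit trace-out-the-Label calculation culminating in the Tichy formula, the per-basis-state versus per-occupation discrepancy is handled by the $n!/\prod_i S_i'!$ multiplicity (or a second Schur transform on the System), and the complexity bound is inherited from the ideal case with qudit dimension $mn$ in place of $m$. Your explicit triangle-inequality error accounting is slightly more detailed than what the paper writes, but in the same spirit.

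Two remarks. First, the factor bookkeeping is phrased backwards: $\Pr[|s'\rangle]$ carries $n!\prod_i S_i!$ in the denominator, and multiplying by the multiplicity $n!/\prod_i S_i'!$ \emph{replaces} the $n!$ by $\prod_i S_i'!$ rather than ``removing one factor of $\prod_i S_i'!$''. The arithmetic lands on Eq.~(\ref{eq:TichyDist}) either way, so this is cosmetic.

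Second, and more substantively, your justification that preparing the distinguishability-encoding superposition over $T$ is efficient---``since the dimension of $\{(n)\}$ for $\mathrm{U}(mn)$ is $\binom{mn+n-1}{n}$, which only affects the qudit count logarithmically''---does not actually establish polynomial gate count. A register of polynomially many qubits still hosts exponentially many states, and generic state preparation on it costs exponentially many gates. The paper, to be fair, also merely asserts efficiency here without proof. What makes the claim true is the \emph{product structure} of the target: the unsymmetrised state is $\bigotimes_k |s_k\rangle|\Phi_{s_k}\rangle$, preparable with $n$ single-qudit rotations, and (for distinct $s_k$) its permuted images are orthogonal on the System register, so the symmetrisation step can be carried out coherently in $\poly(n,m)$ time. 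You were right to flag this as the spot a careful reader would probe; the argument you supplied there is the one piece that needs replacing.
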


\subsection{Complete (in)distinguishability}

The two extreme cases of completely indistinguishability and complete distinguishability are of interest.
For completely indistinguishable bosons, the Label for each is the same (call it $\ket{1}$), and after the (inverse) Schur transform step we have
\begin{equation}
\frac{1}{\sqrt{n!\prod_{i=1}^m S_{i}!}}\sum_{\sigma\in \textrm{S}_n}\sigma|s\rangle|1\rangle^{\otimes n},
\end{equation}
It is clear that the Label register is separable from the System register, and tracing out the Label yields the same situation as the ideal case in Eq.(\ref{eq:IdealState}), as it should.

In the completely distinguishable case, each boson has a different unique Label, correlated to a unique System mode (note $m\geq n$ in this case).
This implies that the occupation $T$ has a single 1 in each of $n$ rows and columns, and zeros elsewhere.
The (inverse) Schur transformed state has System and Label registers $s$ and $l$ that are completely correlated sequences of length $n$ with no repetitions; if we choose to order the bases $123\cdots n$ then we have the state
\begin{align}\label{eq:ComDis}
\frac{1}{\sqrt{n!}} \sum_{\sigma \in \textrm{S}_n}\sigma|s\rangle\sigma|l\rangle
&=\frac{1}{\sqrt{n!}} \sum_{\sigma \in \textrm{S}_n}|\sigma^{-1}(1) \sigma^{-1}(2) \cdots \sigma^{-1}(n)\rangle |\sigma^{-1}(1) \sigma^{-1}(2) \cdots \sigma^{-1}(n)\rangle .
\end{align}
We see that this is maximally entangled on the ``coincident'' subspace of states with a single boson in each mode.
Tracing out the Label yields the reduced System state
\begin{align}
\rho_\mathrm{Sys}
 &= \frac{1}{n!} \sum_{\sigma, \sigma' \in \textrm{S}_n} \sigma|s\rangle\langle s|\sigma'^\dagger \, \langle l|\sigma'^\dagger \sigma|l\rangle \\
 &= \frac{1}{n!} \sum_{\sigma \in \textrm{S}_n} \sigma|s\rangle\langle s|\sigma^\dagger ,
\end{align}
which follows because the Label overlap is only nonzero if $\sigma'^\dag \sigma = \Id \Rightarrow \sigma'=\sigma$ due to the fact that $l$ has no repetitions.
After the action of $U$, the probability of measuring $|s'\rangle \in (\mathbb{C}^m)^{\otimes n}$ is
\begin{align}
\textrm{Pr}[|s'\rangle]
 &= \textrm{Tr}[|s'\rangle\langle s'|U^{\otimes n}\rho U^{\dagger\otimes n}]\\
 &= \frac{1}{n!} \langle s'|U^{\otimes n}\left(\sum_{\sigma \in  \textrm{S}_n}\sigma|s\rangle\langle s|\sigma^\dagger\right)U^{\dagger\otimes n}|s'\rangle\\
 &= \frac{1}{n!} \sum_{\sigma \in  \textrm{S}_n} \prod_{k=1}^n |U_{s'_k,s_{\sigma^{-1}(k)}}|^2\\
 &= \frac{\textrm{per}(|U_{S',S}|^2)}{n!} ,
\end{align}
where we've defined $|U_{S',S}|^2$ as the elementwise square of the absolute value.
We can find the probability of returning occupation $S'$ by summing up the probabilities of all $n!/\prod_{i=1}^{m}(S'_i!)$ states of type $s'$, giving
\begin{equation}\label{eq:DistPer}
\textrm{Pr}[S'] = \frac{\textrm{per}(|U_{S',S}|^2)}{\prod_{i=1}^m S'_i!},
\end{equation}
which agrees with the (classical) probability distribution for sampling with distinguishable bosons~\cite{aaronson2014}.

Note that we could have considered a distinguishable input where each boson has a unique Label, but with multiple occupancy of System modes.
In that case the analysis would show the output distribution to be the same as above up to a factor of $\prod_i S_i$ in the denominator.

\section{Unitary-unitary duality}
\label{sec:uu-duality}

The preceding shows how the Schur transform gives a map between second quantised occupation states and first quantised single particle states via symmetrisation.
The complication added by distinguishability is that each single particle becomes bipartite, with a System and Label degree of freedom.
As shown above, distinguishability arises as correlations between the System and Label registers of the circuit in Fig.~\ref{fig:noisy-bs}.
It turns out that independently transforming the System and Label registers back into the Schur basis can give a Schmidt decomposition for these correlated states (see Fig.~\ref{fig:noisy-bs-rep}).
This can be seen as a consequence of the following duality~\cite{goodman2009, rowe2012}.
\begin{theorem}[Unitary-unitary duality]\label{thm:unitary-unitary}
The totally symmetric irrep of $\mr{U}(md)$ can be decomposed into irreps of $\mr{U}(m) \times \mr{U}(d)$ as
\begin{align}\label{eq:uudecomp}
(\mathbb{C}^{m} \otimes \mathbb{C}^{d})^{(n)} \cong \bigoplus_{\lambda\vdash n} \mathbb{C}^{\{\lambda\}_m} \otimes \mathbb{C}^{\{\lambda\}_d} ,
\end{align}
where $\{\lambda\}_m$ indicates an irrep $\lambda$ of $\mr{U}(m)$, similarly for $\{\lambda\}_d$, and $\lambda$ runs over all partitions of $n$ consistent with both $m$ and $d$.
\end{theorem}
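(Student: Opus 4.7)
The plan is to realise the totally symmetric irrep of $\mr{U}(md)$ as the symmetric subspace of $(\mathbb{C}^m\otimes\mathbb{C}^d)^{\otimes n}$ under the identification $\mathbb{C}^{md}\cong\mathbb{C}^m\otimes\mathbb{C}^d$, and then apply Schur--Weyl duality separately to the System and Label tensor factors. The key opening move is the canonical ``unshuffle'' isomorphism
\begin{equation*}
(\mathbb{C}^m\otimes\mathbb{C}^d)^{\otimes n} \;\cong\; (\mathbb{C}^m)^{\otimes n}\otimes(\mathbb{C}^d)^{\otimes n} ,
\end{equation*}
sending $(v_1\otimes w_1)\otimes\cdots\otimes(v_n\otimes w_n) \mapsto (v_1\otimes\cdots\otimes v_n)\otimes(w_1\otimes\cdots\otimes w_n)$. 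Under this map the $\mr{S}_n$ action defining the symmetric subspace on the left becomes the diagonal action $\sigma\mapsto\sigma\otimes\sigma$ on the right, and the restriction of the $\mr{U}(md)$ action to the subgroup $\mr{U}(m)\times\mr{U}(d)$ becomes $U^{\otimes n}\otimes V^{\otimes n}$. The totally symmetric irrep is therefore identified, as a $\mr{U}(m)\times\mr{U}(d)$ representation, with the diagonal-$\mr{S}_n$-invariant subspace of $(\mathbb{C}^m)^{\otimes n}\otimes(\mathbb{C}^d)^{\otimes n}$.

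Applying Schur--Weyl duality to each factor gives
\begin{equation*}
(\mathbb{C}^m)^{\otimes n}\otimes(\mathbb{C}^d)^{\otimes n}
\;\cong\;\bigoplus_{\lambda,\mu\vdash n}
\mathbb{C}^{\{\lambda\}_m}\otimes\mathbb{C}^{\{\mu\}_d}\otimes\mathbb{C}^{(\lambda)}\otimes\mathbb{C}^{(\mu)} ,
\end{equation*}
in which $\mr{U}(m)\times\mr{U}(d)$ acts only on the first two factors and the diagonal $\mr{S}_n$ action lives entirely in $\mathbb{C}^{(\lambda)}\otimes\mathbb{C}^{(\mu)}$ on each block. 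Passing to invariants reduces the whole calculation to computing the multiplicity of the trivial $\mr{S}_n$ representation in $(\lambda)\otimes(\mu)$. Since complex irreps of $\mr{S}_n$ are self-dual, Schur's lemma gives $\dim\mr{Hom}_{\mr{S}_n}(\mr{triv},(\lambda)\otimes(\mu)) = \dim\mr{Hom}_{\mr{S}_n}((\mu),(\lambda)) = \delta_{\lambda\mu}$, so only the blocks with $\lambda=\mu$ survive, each contributing a unique one-dimensional invariant line. The residual $\mr{U}(m)\times\mr{U}(d)$ action on $\mathbb{C}^{\{\lambda\}_m}\otimes\mathbb{C}^{\{\lambda\}_d}$ is irreducible as an outer tensor product of irreps of the two factors, yielding exactly (\ref{eq:uudecomp}). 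The length constraint $\ell(\lambda)\leq\min(m,d)$ implicit in the statement is automatic from $\mathbb{C}^{\{\lambda\}_k}=\{0\}$ whenever $\ell(\lambda)>k$.

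The main obstacle is the careful bookkeeping of the $\mr{S}_n$ action under the unshuffle: one has to check that permuting the $n$ composite System--Label factors really does induce the simultaneous $\sigma\otimes\sigma$ on the unshuffled tensor, and that this diagonal $\mr{S}_n$ genuinely commutes with $\mr{U}(m)\times\mr{U}(d)$, so that Schur--Weyl can be applied block by block and no further mixing between different $\lambda$-blocks can arise. Once this commutation and the block structure are in hand, the decomposition follows from Schur--Weyl duality together with the orthogonality of $\mr{S}_n$ irreps.
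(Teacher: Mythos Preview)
Your argument is correct and is essentially the same as the paper's: both apply Schur--Weyl duality separately to the System and Label factors after the unshuffle, then identify the totally symmetric subspace as the diagonal $\mr{S}_n$-invariants, which forces $\lambda=\mu$. The only difference is presentational---the paper writes down the explicit invariant state $\sum_{p_\lambda}\ket{\lambda,q_\lambda,p_\lambda}_\mr{Sys}\ket{\lambda,q'_\lambda,p_\lambda}_\mr{Lab}$ (citing Hamermesh), whereas you reach the same conclusion abstractly via $\dim\mr{Hom}_{\mr{S}_n}(\mr{triv},(\lambda)\otimes(\mu))=\delta_{\lambda\mu}$.
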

This can be proven by Schur decomposing the System and Label registers, each of which, by Schur-Weyl duality, will have good permutation symmetry quantum numbers.
The question is then which linear combinations of tensor products of such states are totally symmetric, and the answer turns out to be only states of the form (suppressing normalisation)~\cite{hamermesh1962}
\begin{align}\label{eq:symSL}
\ket{\lambda, q_\lambda, {q'}_\lambda}_\mr{SysLab} &=  \sum_{p_\lambda} \ket{\lambda, q_\lambda, p_\lambda}_\mr{Sys} \ket{\lambda, {q'}_\lambda, p_\lambda}_\mr{Lab} .
\end{align}
Thus a basis for the totally symmetric irrep of the System-Label Hilbert space consists of states of this form, leading to the decomposition in Eq.(\ref{eq:uudecomp}) and to a Schmidt decomposition of totally symmetric (second quantised) System-Label states.

Because the Schur transformations on each register are local to the System and Label, entanglement across this bipartition is unchanged.
For example, the completely distinguishable state that from Eq.(\ref{eq:ComDis}) is seen to have Schmidt rank $n!$ in the computational basis, and therefore must be of the form
\begin{equation}
\frac{1}{\sqrt{n!}} \sum_{\lambda \vdash n} \sum_{q_\lambda^\mr{coin}} \sum_{p_\lambda} |\lambda,q_\lambda,p_\lambda\rangle_{\textrm{Sys}}|\lambda,q_\lambda,p_\lambda\rangle_{\textrm{Lab}},
\end{equation}
where the sum over $q_\lambda^\mr{coin}$ is taken over the coincident subspace, that is, the irrep basis states with pattern weight (or type) $(11\cdots 1)$.
That the dimensions of these spaces are the same can be shown combinatorially and follows from the theorem.
This shows that although in the ideal case only the symmetric subspace is in play and therefore the full Schur transform is overkill, for the distinguishable case all irreps $\lambda$ can play a role.

\begin{figure}
\includegraphics[width=\linewidth]{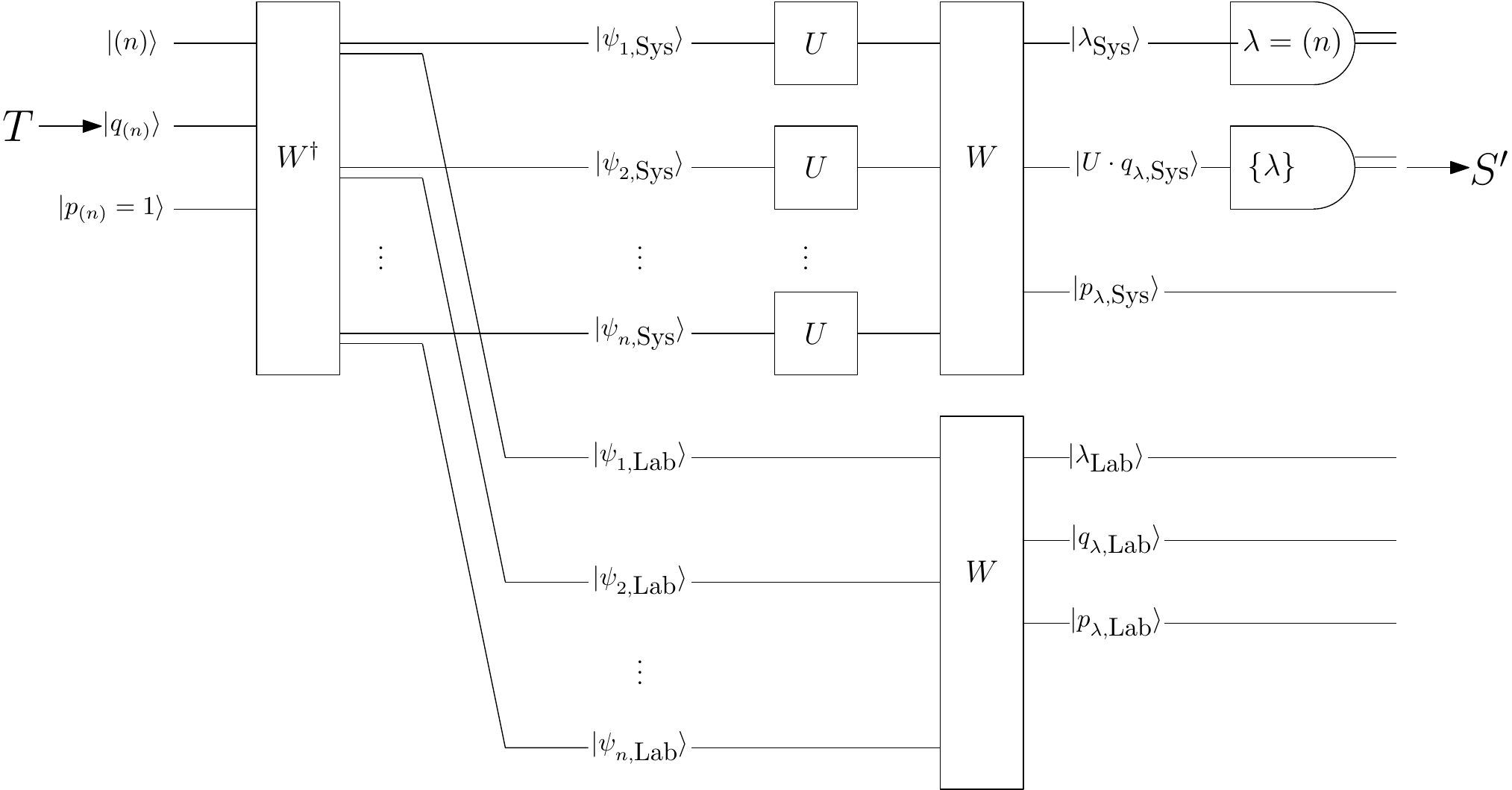}
\caption{Circuit diagram illustrating how postselection can be used to `filter out' distinguishability.
Note that the postselection measurement of the System $\lambda$-register is in the irrep basis, while that of the System $q$-register is in the GZ basis.
The Schur transform on the Label register is not necessary, but illustrates the unitary-unitary duality.}
\label{fig:noisy-bs-rep}
\end{figure}

\subsection{Postselection of ideal bosonic sampling}
\label{sec:post-bs}

Although unitary-unitary duality can be demonstrated in this model by implementing local Schur transforms before measuring, in both the ideal and distinguishable case circuits considered previously, it was argued that this was not necessary; it was enough to measure in the computational basis after the unitary transformation was implemented and post-process.
An interesting observation is that given a distinguishable input, by performing the second Schur transform on the System it becomes possible to use postselection to sample from the indistinguishable distribution.
Of course, this comes at the cost of throwing away a lot of bad samples.

To achieve this postselective filtering, we need to ensure that we only sample the System from the fully symmetric irrep of $\textrm{U}(m)$. 
This is done by measuring the irrep register $|\lambda\rangle_{\textrm{Sys}}$ and waiting for the outcome $\lambda_{\textrm{Sys}}=(n)$. 
After postselection, the amplitudes of the System $q$-register $|U\cdot q_{(n)}\rangle$ are given by Eq.~\ref{eqn:symmetric_action}, which give the same probability distribution as sampling indistinguishable bosons. 
Following the arguments above, the circuit remains efficient since the added Schur transform can be implemented efficiently. 

More generally, such a postselected quantum circuit could sample from any irrep $\lambda$ of $\textrm{U}(m)$. 
All we need to achieve this is to ensure that the input state has support in the irrep we wish to sample from, and postselect on being in that irrep.
A dimension counting argument shows that the completely distinguishable input discussed above has support in all irreps~\cite{stanisic2018}, and so could be used for this purpose.

\section{Sampling with loss}
\label{sec:loss}

Another serious practical difficulty with linear optics is the loss of photons through unwanted scattering processes. 
In this section, we discuss how the loss model developed by Aaronson and Brod~\cite{aaronson2015} can be simulated. 
In their model, $n+k$ bosons are generated as occupation $S^0$, $k$ of which are lost before they reach the interferometer. 
As we don't know which bosons were lost, Aaronson and Brod take the average over the set of all $n$-boson occupations which are consistent with $S^0$, denoted $\bar{\Lambda}_{m,S^0,n}$.
The result can be shown by tracing out any choice of $k$ qudits in the ideal case, as shown in Fig.~\ref{fig:lost-bs}. 

\begin{figure}
\includegraphics[width=\linewidth]{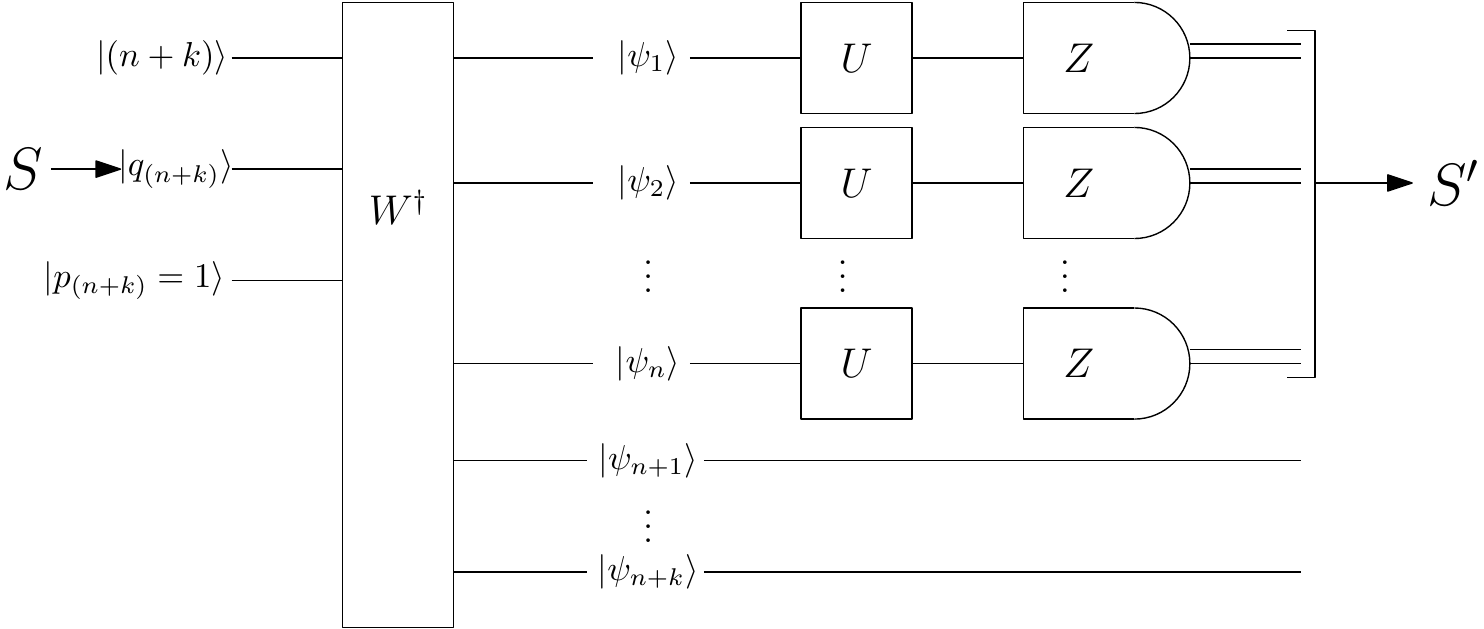}
\caption{Circuit for sampling when $k$ bosons are lost. 
Here, we ignore $k$ qudits of the System register, tracing them out with the Label register. 
As with Fig.~\ref{fig:noisy-bs}, measurements are in the computational basis.}
\label{fig:lost-bs}
\end{figure}

\begin{theorem}
Let $|\psi\rangle$ be the state after step 2 of Algorithm \ref{alg:bs} with $(n+k)$-boson input state $S^0$, and $\bar{\Lambda}_{m,S^0,n}$ be the set of all $n$-boson occupations which are consistent with $S^0$.
If $k$ qudits of $|\psi\rangle$ are traced out before continuing with the algorithm, the final probability distribution of output occupations $S'$, denoted $\mathcal{D}_U$, is
\begin{equation}
\mr{Pr}_{\mathcal{D}_U}[S'] = \frac{1}{\binom{n+k}{k}}\sum_{S \in \bar{\Lambda}_{m,S^0,n}} |\mr{per}(U_{S',S})|^2 \prod_{i=1}^m \frac{\binom{S_i^0}{S_i}}{S'_i! S_i!} .
\end{equation}
\end{theorem}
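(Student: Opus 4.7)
The plan is to reduce the loss model to a classical mixture of ideal indistinguishable bosonic sampling problems indexed by sub-occupations of $S^0$, and then invoke the ideal analysis of Section~\ref{sec:qc-bs} termwise.

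First I would use Eq.~(\ref{eq:IdealState}) (with $n$ replaced by $n+k$ and $S$ by $S^0$) to write $|\psi\rangle$, the $(n+k)$-qudit state produced by the inverse Schur transform, as the normalised uniform superposition
\begin{equation*}
|\psi\rangle = \sqrt{\frac{\prod_i S_i^0!}{(n+k)!}} \sum_{s^0 \text{ of type } S^0} |s^0\rangle
\end{equation*}
over all $(n+k)!/\prod_i S_i^0!$ distinct computational-basis strings of type $S^0$ in $(\mathbb{C}^m)^{\otimes(n+k)}$.

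Next I would compute the reduced density matrix $\rho_n$ obtained by tracing out any $k$ of these qudits, which is unambiguous because $|\psi\rangle$ is permutation invariant. The cross-terms $|s^0\rangle\langle \tilde{s}^0|$ survive the partial trace only when the two traced-out tails coincide as strings, so grouping survivors by the type $S$ carried by the remaining $n$ positions (in which case the tail is forced to have type $S^0 - S$) and counting the $k!/\prod_i (S_i^0 - S_i)!$ tails of that type yields
\begin{equation*}
\rho_n = \sum_{S \in \bar{\Lambda}_{m,S^0,n}} \frac{\prod_i \binom{S_i^0}{S_i}}{\binom{n+k}{k}} \, |\psi_S\rangle\langle \psi_S|,
\end{equation*}
where $|\psi_S\rangle$ denotes the normalised symmetric state of type $S$. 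The coefficient is the multivariate hypergeometric probability of retaining $S_i$ bosons in mode $i$ when $k$ of the $n+k$ bosons are lost uniformly at random, so tracing out $k$ qudits from the symmetrised input is precisely the physical loss process claimed by Aaronson and Brod.

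Finally, each $|\psi_S\rangle$ coincides with the state that Algorithm~\ref{alg:bs} produces from an ideal input occupation $S$, so the computation in Section~\ref{sec:qc-bs} immediately gives $\mr{Pr}[S'\mid S] = |\mr{per}(U_{S',S})|^2 / (\prod_i S'_i!\, S_i!)$ for each branch. Averaging these against the hypergeometric weights (and using $\binom{n+k}{k}=\binom{n+k}{n}$) delivers the stated formula. The only step requiring genuine care is the combinatorial bookkeeping in the partial trace; once $\rho_n$ is recognised as the hypergeometric mixture above, the rest follows by linearity from the ideal analysis already established.
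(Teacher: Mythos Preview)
Your proposal is correct and follows essentially the same route as the paper: both compute the partial trace of the symmetrised $(n+k)$-qudit state, recognise the result as a hypergeometric mixture of ideal symmetric states $|\psi_S\rangle$ over sub-occupations $S\subseteq S^0$, and then apply the ideal analysis termwise. The only cosmetic difference is that you organise the combinatorics via distinct computational-basis strings of each type, whereas the paper works with sums over permutations and tracks duplicate terms explicitly; both lead to the same reduced state $\rho_n$ and hence the same final distribution.
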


\begin{proof}
The state $|\psi\rangle$ can be written as the density matrix
\begin{equation}
\rho = |\psi\rangle\langle\psi| = \frac{1}{(n+k)!\prod_{i=1}^m S^0_i!} \sum_{\sigma\in\textrm{S}_{n+k}}\sigma|s^0\rangle\sum_{\sigma'\in\textrm{S}_{n+k}}\langle s^0|\sigma'^\dagger,
\end{equation}
where $|s^0\rangle$ is any state consistent with the input state occupation $S^0$.
This state is symmetric, so the choice of which qudits to trace out is moot.
Choosing the last $k$ qudits, the reduced density matrix for the remaining $n$ particles is
\begin{align}
\rho_n
 &= \frac{1}{(n+k)!\prod_{i=1}^m S^0_i!}\sum_{\sigma,\sigma'\in\textrm{S}_{n+k}}\bigotimes_{l=1}^n|s^0_{\sigma^{-1}(l)}\rangle\langle s^0_{\sigma'^{-1}(l)}|\bigotimes_{l=n+1}^{n+k}\langle s^0_{\sigma'^{-1}(l)}|s^0_{\sigma^{-1}(l)}\rangle\label{eqn:losstrace}\\
 &=\frac{1}{(n+k)!\prod_{i=1}^m S^0_i!}\sum_{\substack{S \subseteq \{s^0_1,\dots,s^0_{n+k}\}\\\bar{S}=\{s^0_1,\dots,s^0_{n+k}\}\setminus{S}\\|S|=n}}\sum_{\sigma \in \textrm{S}_{n}}\sigma|s\rangle\sum_{\sigma' \in \textrm{S}_{n}}\langle s|\sigma'^\dagger\sum_{\tau,\tau'\in\textrm{S}_k}\langle \bar{s}|\tau'^\dagger\tau|\bar{s}\rangle\label{eqn:firstloss}\\
 &=\frac{1}{(n+k)!\prod_{i=1}^m S^0_i!}\sum_{\substack{S \in \Lambda_{m,S^0,n}\\\bar{S}=\{s^0_1,\dots,s^0_{n+k}\}\setminus{S}}}\prod_{i=1}^m\left(\frac{S^0_i!}{S_i!(S_i^0-S_i)!}\right)^2\sum_{\sigma \in \textrm{S}_{n}}\sigma|s\rangle\sum_{\sigma' \in \textrm{S}_{n}}\langle s|\sigma'^\dagger\sum_{\tau,\tau'\in\textrm{S}_k}\langle \bar{s}|\tau'^\dagger\tau|\bar{s}\rangle\label{eqn:secondloss}\\
 &=\frac{k!}{(n+k)!}\sum_{S \in \Lambda_{m,S^0,n}}\frac{\prod_{i=1}^m S^0_i!}{(\prod_{j=1}^m S_j!)^2\prod_{i=1}^m (S^0_i-S_i)!}\sum_{\sigma \in \textrm{S}_{n}}\sigma|s\rangle\sum_{\sigma' \in \textrm{S}_{n}}\langle s|\sigma'^\dagger\label{eqn:thirdloss}\\
 &=\frac{1}{\binom{n+k}{n}}\sum_{S \in \Lambda_{m,S^0,n}}\frac{\prod_{i=1}^m \binom{S_i^0}{S_i}}{n!\prod_{j=1}^m S_j!}\sum_{\sigma \in \textrm{S}_{n}}\sigma|s\rangle\sum_{\sigma' \in \textrm{S}_{n}}\langle s|\sigma'^\dagger , \label{eq:lastloss}
\end{align}
where now $|s\rangle$ (resp.\ $|\bar{s}\rangle$) is any state consistent with the occupation $S$ (resp.\ $\bar{S}$). 
In this calculation we first break the qudits into multisets of $S$ and $\bar{S}$ with respective sizes $n$ and $k$, and permute each multiset individually, which is done in Eq.(\ref{eqn:firstloss}). 
Note that $S \subseteq \{s^0_{1},\dots,s^0_{n+k}\}$ such that $|S|=n$ implies that $S\in\Lambda_{m,S^0,n}$ defined above, so we can sum over $\Lambda_{m,S^0,n}$. 
However, doing so will ignore duplicates of $S$ we had when considering multisets included in $\{s^0_1,\dots,s^0_{n+k}\}$, which need to be acconted for. 
The total number of duplicate terms is the number of permutations $\sigma,\sigma'\in\textrm{S}_{n+k}$ for which $|s^0\rangle$ is invariant, of which there are $(\prod_{i=1}^m S^0_i!)^2$. 
The permutations $\sigma,\sigma'\in\textrm{S}_n$ and $\tau,\tau'\in\textrm{S}_k$ mean that $(\prod_{i=1}^mS_i!(S_i^0-S_i)!)^2$ duplicates are already accounted for. 
Putting these two points together, we get the factor seen in Eq.(\ref{eqn:secondloss}). 
Finally in Eq.(\ref{eqn:thirdloss}), we take the inner product, noting that $\sum_{\tau,\tau'\in\textrm{S}_k}\langle \bar{s}|\tau'^\dagger\tau|\bar{s}\rangle = k!\prod_{i=1}^m(S_i^0-S_i)!$.

Applying the unitary transformation $U$ and measuring in the computational basis, we find that the calculation of the probability of measuring a state $|s'\rangle \in (\mathbb{C}^m)^{\otimes n}$ goes through much as in the previous sections.
Applying the same methods as before, we have
\begin{align}
\textrm{Pr}[|s'\rangle] &= \textrm{Tr}[|s'\rangle\langle s'|(U)^{\otimes n}\rho_n (U^\dagger)^{\otimes n}]\\
&= \frac{1}{\binom{n+k}{n}}\sum_{S \in \bar{\Lambda}_{m,n+k,n}} \frac{\prod_{i=1}^m \binom{S_i^0}{S_i}}{n!\prod_{j=1}^m S_j!} \left|\langle s'|U^{\otimes n}\left(\sum_{\sigma \in \textrm{S}_n}\sigma|s\rangle\right)\right|^2\\
&= \frac{1}{\binom{n+k}{n}}\sum_{S \in \bar{\Lambda}_{m,n+k,n}} |\mr{per}(U_{S',S})|^2 \prod_{i=1}^m \frac{\binom{S_i^0}{S_i}}{n!S_i!} .
\end{align}
To find the probability of sampling occupation $S'$, we add together the probabilities for all computational basis states $|s'\rangle$ that map to $S'$, of which there are $n!/\prod_{i=1}^m S'_i!$. 
This gives us the desired probability distribution.
\end{proof}

Combining loss with distinguishability can be simulated by splitting the remaining $n$ qudits in the state given by Eq.(\ref{eq:lastloss}) into System and Label registers, and tracing out the Label.
This would result in similar averages over the lossless cases described in the sections above.

\section{Distinguishability and simulateability}
\label{sec:mixture}

Our model of distinguishability as correlations with the Label register gives an explicit decoherence model for the computation on the System register.
A natural question is to ask at what point this decoherence renders the quantum computation classically simulateable.
There is a large amount of literature surrounding classical simulation of mixed state quantum computing~\cite{fujii2014, morimae2014classical, morimae2014hardness}, and the role of entanglement~\cite{harrow2003, virmani2005, buhrman2006, vidal2003, animesh2007}. 
We've already seen that when the input is completely distinguishable, the output distribution is given by the permanents of positive matrices, Eq.(\ref{eq:DistPer}), which can be approximated in polynomial time~\cite{anari2017}. 
This efficient permanent approximation method can be used with Clifford and Clifford's algorithm~\cite{clifford2017} to produce a polynomial runtime for approximate sampling. 
Another method for efficiently simulating fully distinguishable photons is to simulate each photon going through the interferometer individually~\cite{aaronson2014,neville2017}. 

The discussion up to this point shows that results on the classical simulation of the Schur transform would allow us to answer this question, but general results along these lines are to the best of our knowledge not available.
As mentioned above, one way to approach the question is to consider the multipartite entanglement properties of the mixed state of the System that results after tracing the Label.

Without a specific noise model, there are several mixed states we could consider; an obvious one is a mixture of the ideal and completely indistinguishable states
\begin{equation}\label{eq:Werner}
\rho_\epsilon = \epsilon\left(\frac{\sum_{\sigma \in  \textrm{S}_n}\sigma|s\rangle}{\sqrt{n!}}\right)\left(\frac{\sum_{\tau \in  \textrm{S}_n}\langle s|\tau^\dagger}{\sqrt{n!}}\right) + (1-\epsilon)\left(\frac{\sum_{\sigma \in  \textrm{S}_n}\sigma|s\rangle\langle s|\sigma^\dagger}{{n!}}\right).
\end{equation}
Equation~(\ref{eq:ComDis}) tells us that the completely indistinguishable state is maximally mixed on the coincident subspace, and it has been shown that states of the form $\epsilon\ket{\psi}\bra{\psi} + (1-\epsilon)\Id/d$ are separable for sufficiently small $\epsilon$~\cite{braunstein1999, rungta2001}, where $\Id/d$ is the completely mixed state on the entire space (the tensor product $(\mathbb{C}^m)^{\otimes n}$).
It is therefore tempting to conclude that there is a measurable set of states near the completely indistinguishable state that are separable.
However, the completely indistinguishable state only has support on the coincident subspace, which is not a tensor product, and so these results cannot be applied directly.

We can in fact show that for any $\epsilon>0$ the reduced System state of Eq.(\ref{eq:Werner}) is entangled, in that it fails the partial transpose criterion~\cite{chen2002}.
This is similar to results that show a mixture of the totally antisymmetric state and the projector on the symmetric subspace are entangled for two qudits~\cite{breuer2006}.
Rewrite $\rho_\epsilon$ in the form
\begin{align}
\rho_\epsilon
 &= \frac{1}{n!}\left((1-\epsilon)\sum_{\sigma\in \textrm{S}_n}\sigma|s\rangle\langle s|\sigma^\dagger + \epsilon\sum_{\sigma,\tau \in \textrm{S}_n}\sigma|s\rangle\langle s|\tau^\dagger\right)\\
 &= \frac{1}{n!}\left((1-\epsilon)\sum_{\sigma\in \textrm{S}_n}\sigma|s\rangle\langle s|\sigma^\dagger + \epsilon\sum_{\substack{\sigma,\tau \in \textrm{S}_n\\\sigma^{-1}(1) = \tau^{-1}(1)}}\sigma|s\rangle\langle s|\tau^\dagger + \epsilon\sum_{\substack{\sigma,\tau \in \textrm{S}_n\\\sigma^{-1}(1) \neq \tau^{-1}(1)}}\sigma|s\rangle\langle s|\tau^\dagger\right) ,
\end{align}
where we have separated the sum over $\sigma$ and $\tau$ into two sums, based on whether or not $\sigma^{-1}(1) = \tau^{-1}(1)$. 
Transposing the first qudit will leave the first of these sums invariant, while always affecting the second. 
The resulting state is
\begin{align}
\rho_{\epsilon}^{T_1}=
 & \frac{1}{n!} \left((1-\epsilon)\sum_{\sigma\in \textrm{S}_n}\sigma|s\rangle\langle s|\sigma^\dagger + \epsilon\sum_{\substack{\sigma,\tau \in \textrm{S}_n\\\sigma^{-1}(1) = \tau^{-1}(1)}}\sigma|s\rangle\langle s|\tau^\dagger\right.\\
 &\left. + \epsilon\sum_{\substack{\sigma,\tau \in \textrm{S}_n\\\sigma^{-1}(1) \neq \tau^{-1}(1)}}|s_{\tau^{-1}(1)}\rangle\langle s_{\sigma^{-1}(1)}|\bigotimes_{i=2}^n|s_{\sigma^{-1}(i)}\rangle\langle s_{\tau^{-1}(i)}|\right)
\end{align}
To work out the trace norm of this density matrix, we need to multiply it by its transpose. 
This gives us
\begin{align}
\rho_{\epsilon}^{T_1}\rho_{\epsilon}^{T_1\dagger}=
 &\frac{1}{(n!)^2} \left((1-\epsilon)^2\sum_{\sigma\in \textrm{S}_n}\sigma|s\rangle\langle s|\sigma^\dagger + 2(1-\epsilon)\epsilon\sum_{\substack{\sigma,\tau \in \textrm{S}_n\\\sigma^{-1}(1) = \tau^{-1}(1)}}\sigma|s\rangle\langle s|\tau^\dagger\right.\\
&\left.+ \epsilon^2\sum_{\substack{\sigma,\tau,\upsilon,\chi \in \textrm{S}_n\\\sigma^{-1}(1) = \tau^{-1}(1)\\\upsilon^{-1}(1) = \chi^{-1}(1)}}\sigma|s\rangle\langle s|\tau^\dagger\chi|s\rangle\langle s|\upsilon^\dagger\right.\\
&\left. + \epsilon\sum_{\substack{\sigma,\tau,\upsilon,\chi \in \textrm{S}_n\\\sigma^{-1}(1) \neq \tau^{-1}(1)\\\upsilon^{-1}(1) \neq \chi^{-1}(1)}}|s_{\tau^{-1}(1)}\rangle\langle s_{\sigma^{-1}(1)}|s_{\upsilon^{-1}(1)}\rangle\langle s_{\chi^{-1}(1)}|\bigotimes_{i=2}^n|s_{\sigma^{-1}(i)}\rangle\langle s_{\tau^{-1}(i)}|s_{\chi^{-1}(i)}\rangle\langle s_{\upsilon^{-1}(i)}|\right)\\
&= \frac{1}{(n!)^2}\left((1-\epsilon)^2\sum_{\sigma\in \textrm{S}_n}\sigma|s\rangle\langle s|\sigma^\dagger + 2(1-\epsilon)\epsilon\sum_{\substack{\sigma,\tau \in \textrm{S}_n\\\sigma^{-1}(1) = \tau^{-1}(1)}}\sigma|s\rangle\langle s|\tau^\dagger\right.\\
&\left.+ \epsilon^2(n-1)!\sum_{\substack{\sigma,\tau \in \textrm{S}_n\\\sigma^{-1}(1) = \tau^{-1}(1)}}\sigma|s\rangle\langle s|\tau^\dagger\right.
\left. + \epsilon\sum_{\substack{\sigma,\tau,\upsilon \in \textrm{S}_n\\\sigma^{-1}(1) \neq \tau^{-1}(1)\\\sigma^{-1}(1) = \upsilon^{-1}(1)}}|s_{\tau^{-1}(1)}\rangle\langle s_{\tau^{-1}(1)}|\bigotimes_{i=2}^n|s_{\sigma^{-1}(i)}\rangle\langle s_{\upsilon^{-1}(i)}|\right).
\end{align}
The square root of this matrix is
\begin{align}
\sqrt{\rho_\epsilon^{T_1}\rho_\epsilon^{T_1\dagger}}
 &= \frac{1}{n!}\left((1-\epsilon)\sum_{\sigma\in \textrm{S}_n}\sigma|s\rangle\langle s|\sigma^\dagger + \epsilon\sum_{\substack{\sigma,\tau \in \textrm{S}_n\\\sigma^{-1}(1) = \tau^{-1}(1)}}\sigma|s\rangle\langle s|\tau^\dagger\right.\\
&\left. + \frac{\epsilon}{(n-1)!}\sum_{\substack{\sigma,\tau,\upsilon \in \textrm{S}_n\\\sigma^{-1}(1) \neq \tau^{-1}(1)\\\sigma^{-1}(1) = \upsilon^{-1}(1)}}|s_{\tau^{-1}(1)}\rangle\langle s_{\tau^{-1}(1)}|\bigotimes_{i=2}^n|s_{\sigma^{-1}(i)}\rangle\langle s_{\upsilon^{-1}(i)}|\right).\label{eqn:mismatched-perms}
\end{align}
From this we can work out the trace norm as
\begin{align}
\|\rho_{\epsilon}^{T_1}\|_*
 &= \textrm{Tr}\left[\sqrt{\rho_\epsilon^{T_1}\rho_\epsilon^{T_1\dagger}}\right]\\
 &= \frac{1}{n!}\left((1-\epsilon)n! + \epsilon n! + \frac{\epsilon}{(n-1)!}(n-1)n!(n-1)!\right)\\
 &= 1-\epsilon + \epsilon+\epsilon(n-1)\\
 &= 1+\epsilon(n-1).
\end{align}
Here, the $(1-\epsilon)n!$ and $\epsilon n!$ terms come from the trace of the first two terms. 
The third term comes from Eq.(\ref{eqn:mismatched-perms}), the trace of which one can think of as the number of ways we can pick permutations $\sigma,\tau\in\textrm{S}_n$ such that $\sigma^{-1}(1)\neq\tau^{-1}(1)$. 
This can be worked out by choosing any permutation $\sigma \in \textrm{S}_n$, and constructing $\tau$ by first choosing $\tau^{-1}(1)\neq\sigma^{-1}(1)$ and choosing $\tau^{-1}(i)$ for $i>1$ to be a permutation in $\textrm{S}_{n-1}$. 
Thus the overall number is $(n-1)n!(n-1)!$.

We can see that the trace norm of $\rho_\epsilon^{T_1}$ therefore fails the generalised partial transposition criterion for separability if $n>1$ and $\epsilon>0$.
Note that this does not imply that efficiently sampling from unitary actions on such states is not classically possible for any nonzero $\epsilon$, merely that techniques used for simulating separable states cannot be used for exact sampling in this case.

\section{Conclusion}
\label{sec:conclusion}

We have described how to use the Schur transform to perform a quantum simulation of bosonic sampling when the bosons are arbitrarily distinguishable.
These results make it clear that ideal $n$ boson, $m$ mode linear interferometry is equivalent to a transversal $n$ qudit quantum circuit, with the constraint that the input must be totally symmetric -- that is, the ordering of the qudits must be erased.
Moreover, we can introduce nonideal aspects into the quantum simulation by tracing out qudits (loss), or introducing ancillas and entanglement (distinguishability).
A recently released paper focusing on the issue of loss in more detail makes similar connections~\cite{oszmaniec2018}.

A broad aim of future research along this approach is to better understand how the computational complexity of sampling from photons changes as photons become more distinguishable. 
By better understanding of how these intermediate levels of distinguishability link to representation theory, the hope is that it will be easier to find either classical algorithms for these cases similarly to Clifford and Clifford~\cite{clifford2017}, or in finding reductions to sampling imminants~\cite{mertens2013} much as Aaronson and Arkhipov focused on permanents. 
Indeed, recent work by Havl\'{i}\u{c}ek and Strelchuk has demonstrated the potential for the use of the Schur transform in understanding the complexity of sampling problems~\cite{havlicek2018}.

\section*{Acknowledgements}
AEM was supported by the Bristol Quantum Engineering Centre for Doctoral Training, EPSRC grant EP/L015730/1. 
PST was supported in part by an EPSRC First Grant EP/N014812/1. 
We would like to thank A Harrow and S Stanisic for insightful discussions, and A Montanaro for helpful comments.
No new data were created during this study.

\bibliographystyle{unsrt}
\bibliography{noisy_circuit}

\end{document}